\let\csname equation*\endcsname=\relax
\let\csname endequation*\endcsname=\relax
\newtheorem{theorem}{Theorem} 
\newtheorem{lemma}{Lemma}
\newtheorem{corollary}{Corollary}
\newenvironment{proof}[1][Proof]{\begin{trivlist}
\item[\hskip \labelsep {\bfseries #1}]}{\end{trivlist}}
\newcommand{\qed}{\nobreak \ifvmode \relax \else
      \ifdim\lastskip<1.5em \hskip-\lastskip
      \hskip1.5em plus0em minus0.5em \fi \nobreak
      \vrule height0.75em width0.5em depth0.25em\fi}
\newcommand{\sech}{\rm sech\,}
\def\beann{\begin{eqnarray*}}
\def\eeann{\end{eqnarray*}}
\def\be{\begin{equation}}
\def\eea{\end{eqnarray}}
\def\ee{\end{equation}}
\def\bea{\begin{eqnarray}}
\def\ea{\end{array}}
\def\ba{\begin{array}}
\newcommand{\bel}[1]{\begin{equation}\label{#1}}
\def\zzz{{\mathchoice {\hbox{$\sf\textstyle Z\kern-0.4em Z$}}
{\hbox{$\sf\scriptstyle Z\kern-0.3em Z$}}
{\hbox{$\sf\scriptscriptstyle Z\kern-0.2em Z$}}
{\hbox{$\sf\textstyle Z\kern-0.4em Z$}}}}
\colorlet{shadecolor}{gray!18}
\begin{document}

\title[The double hypergeometric series for the 2D anisotropic Ising model]
{The double hypergeometric series for the partition function of
  the  2D anisotropic Ising model}

\author{G. M. Viswanathan}

\address{Department of Physics
and {National Institute of Science and Technology of Complex Systems,}
Universidade Federal do Rio Grande do Norte, 59078-970
Natal--RN, Brazil}

\begin{abstract}
  In 1944 Lars Onsager published the exact partition function of the
  ferromagnetic Ising model on the infinite square lattice in terms of
  a definite integral. Only in the literature of the last decade,
  however, has it been recast in terms of special functions.  Until
  now all known formulas for the partition function in terms of
  special functions have been restricted to the important special case
  of the isotropic Ising model with symmetric couplings. Indeed, the
  anisotropic model is more challenging because there are two
  couplings and hence two reduced temperatures, one for each of the
  two axes of the square lattice.  Hence, standard special functions
  of one variable are inadequate to the task.  Here, we reformulate
  the partition function of the anisotropic Ising model in terms of
  the Kampé de Fériet function, which is a double hypergeometric
  function in two variables that is more general than the Appell
  hypergeometric functions.  Finally, we present hypergeometric
  formulas for the generating function of multipolygons of given
  length on the infinite square lattice, for isotropic as well as
  anisotropic edge weights.  For the isotropic case, the results allow
  easy calculation, to arbitrary order, of the celebrated series 
  found by Cyril Domb.
\end{abstract}

\maketitle

\section{Introduction}

Lars Onsager solved the ferromagnetic Ising model with nearest
neighbor interactions on the infinite square lattice in 1942 and
published the solution in 1944 in a breakthrough paper~\cite{onsager}.
Let $J_1$ and $J_1$ denote the couplings, $K_i= J_i/kT$ the reduced
temperatures, and $Z(K_1,K_2)$ the partition function per site of the
ferromagnetic Ising model on the square lattice.  Onsager's solution is
then given by
\be
\label{eq-eoirhoierjoierjgoierjoiwejoiwjeth}
\log (Z/2) =
\frac{1}{2 \pi^2}
\underset{~0}{\overset{~~\mbox{\small $\pi$}} {\iint}}
\log[\cosh 2K_1 \cosh 2K_2 -\sinh 2K_1 \cos \theta_1 - \sinh 2K_2 \cos \theta_2]
~d\theta_1d\theta_2~.
~  
\ee
Furthermore, let
\bea
2\kappa_1 &=& \tanh 2K_1 ~\sech 2K_2  \nonumber\\
2\kappa_2 &=& \tanh 2K_2  ~\sech 2K_1  \label{eq-kappa12}
~.
\eea
Then, Onsager showed that 
\smallskip 
\bea & & \nonumber 
\log Z -
\tfrac 1 2 \log (4 \cosh 2K_1 \cosh 2K_2) 
\\
& & 
= 
\frac 1 {2\pi^2}
\underset{0}
{\overset{~~\mbox{\small $\pi$}}
  {\iint}}
\log[1- 2\kappa_1 \cos \theta_1 - 2 \kappa_2 \cos \theta_2 ]
~d\theta_1d\theta_2
\label{eq-347t9823gh3f23}
 \\
& &
= - \frac 1 2  \sum_{j+k>0}^\infty
{
  (2j + 2k -1)!
\over 
(j!)^2 (k!)^2
}
\kappa_1^{2j}
\kappa_1^{2k} ~.
\label{eq-Z-aeiurhgiuerhguierg}
\eea
This last Eq.~(\ref{eq-Z-aeiurhgiuerhguierg}) is Eq.~(109b) in the
1944 paper~\cite{onsager}.  It is easily obtained  from
Eq.~(\ref{eq-347t9823gh3f23}) as follows: (i) first Taylor expand 
the logarithm in the Mercator series, (ii) then apply the binomial
theorem to expand powers of $(2\kappa_1 \cos \theta_1 + 2 \kappa_2
\cos \theta_2)$, and (iii) finally perform the double integral of the
now-separable integrands as an iterated integral, term by term. For
the case of the isotropic Ising model with symmetric couplings
$J_1=J_2=J$, we can write $K_1=K_2=K$ and $\kappa_1=\kappa_2=\kappa$
and the above series simplifies considerably (see Eq.~(109c) of
Ref.~\cite{onsager}).

In the literature of the last decade, the partition function for the
isotropic Ising model has been reformulated in terms of special
functions (see below). However, the anisotropic model is much more
challenging because there are two reduced temperatures --- one for
each coupling. Here we address this problem by more carefully studying
the series (\ref{eq-Z-aeiurhgiuerhguierg}).  We thereby obtain a
double hypergeometric reformulation of the partition function for the
2D anisotropic Ising model.

Hypergeometric functions belong to the family of special
functions~\cite{andrewsbook,sfagt}.  Much insight and understanding
can be gained when a definite integral or an infinite series is
re-expressed in terms of special functions, because the latter have
been widely studied and are often better understood than the former.
The analogy with elementary functions can be helpful to make this
point clear: \smallskip
\beann
\sum_{n=1}^\infty {(1-x)^n\over n} &= -\log (x)   ~, \quad\quad
& |x-1|<1   \\
\int_a^b {dx\over x} &= \log\left(b\over a\right)~, \quad\quad & {b\over a}>0 ~.
 \eeann
The left-hand sides of the above are precisely equal to the right-hand
sides.  However, the
appearance of the 
logarithm  on the
right gives us immediate insight and intuition that may not be so
obvious in the expressions to  the left.  The same kind of insight can
be gained from special functions.
For further details on the motivation for expressing
definite integrals and infinite series in terms of special functions,
see Refs. \cite {andrewsbook,jstat2015,cipra1998,sfagt,scirep,kdf2,hucht2011} .

In Section \ref{sec-hyper} we review the definition of the
prerequisite hypergeometric functions. In Section \ref{sec-res} we
state our main results, together with the proofs.
Section 
\ref{sec-concl} ends with discussion and conclusions.

\section{The Kampé de Fériet double hypergeometric function}
\label{sec-hyper}

The generalized hypergeometric function $_pF_q$ of a variable $x$ has
series expansion $\sum c_n x^n $ such that the ratio $c_{n+1}/c_n$ of
successive coefficients is a ratio of polynomials in $n$. In other
words, the ratio of coefficients is a a rational function of $n$. The
degrees of the polynomials of the numerator and denominator are $p$
and $q+1$ respectively.
 Let the
Pochhammer symbol $(x)_n$ denote the
rising factorial for 
$n=1,2,3\ldots$ ~as follows: 
\begin{eqnarray} 
(x)_0&=1~, \nonumber   \nonumber \\
  (x)_n&= (x)_{n-1} (x+n-1)  \nonumber   \\
  &= {\Gamma(x+n) \over \Gamma(x)}~.
\label{eq-pochwewegwgwrg}
\end{eqnarray}
Then 
$_pF_q$ is defined as follows~\cite{andrewsbook}:
    \bel{eq-pfq-def}
_pF_q\left[ \ba{c}{a_1,a_2,\dots, a_p} \\ {b_1,b_2,\dots,b_q}\ea ;x \right]
= \sum_{n=0}^\infty
{(a_1)_n(a_2)_n \dots (a_p)_n  \over (b_1)_n(b_2)_n \dots (b_q)_n  }
 ~{x^n \over
         {n!}}
~.  
\ee 
For $p<q+1$ the function is entire.  In contrast, for $p>q+1$ it has
zero radius of convergence. For $p=q+1$ the radius of convergence is
1.

The isotropic Ising model can be expressed in terms of $_pF_q$
functions (see below).  In contrast, the expression
(\ref{eq-Z-aeiurhgiuerhguierg}) for the anisotropic Ising model contains a power
series in two different variables, hence we will need a 
more general and more powerful 
hypergeometric function.

The Kampé de Fériet function
$F^{p,q}_{r,s}$
is a double hypergeometric function of
higher order in 2 variables~\cite{kdf1,kdf2} that 
generalizes the $_p F_q$ function.
It is defined as  
\smallskip
\bea 
& & 
F^{p,q}_{r,s}
\left [~
\renewcommand\arraystretch{1.4} 
  \ba{c|c|}
  a_1\ldots a_p & ~b_1, b'_1 ~; ~\ldots~; ~ b_q, b'_q \\
    c_1\ldots c_r & ~  d_1, d'_1 ~; ~\ldots~; ~ d_s, d'_s  
\ea~~  x,y
\right] \nonumber 
\\
~
& & ~~
= ~
\sum_{m,n=0}^\infty
{
 ~\! \prod_{k=1}^p (a_k)_{m+n} ~~  \prod_{k=1}^q \big [ (b_k)_{m} (b'_k)_{n}  \big]
 \over
 ~\prod_{k=1}^r (c_k)_{m+n}  ~~ \prod_{k=1}^s  \big [  (d_k)_{m} (d'_k)_{n} \big]
} ~{x^m y^n \over m!n!}
~~.
\label{eq-kdf-def}
\eea

\noindent  
According to convention, if any of the quantities $p$ or $q$ in
(\ref{eq-pfq-def}) or any of $p,~q,~r,~s$  in (\ref{eq-kdf-def}) are
zero, then a   dash is used to indicate the absence of the
corresponding parameters, e.g.,
\[
\exp(x) = {_0}F_0
\left[ \ba{l} - \\ - \ea; x\right]~.
\]

We note in passing that for specific low values of the parameters
$p,~q,~r,~s$, the Kampé de Fériet function can be reduced to the 
simpler Appell hypergeometric functions,  of which there are four. For
the particular values of the parameters $p$ and $q$ for the anisotropic
Ising model (see below), this  standard reduction to the Appell
functions is not possible, so we do not further discuss this point
here.

\section{Results and proofs}
\label{sec-res}

Our first  result is the natural continuation of previous results.
In the 1970s,  Glasser and Onsager obtained the following
expression for the partition function of the 2D isotropic Ising
model~\cite{jstat2015,pc}: \be
  \ln Z(K) = \ln(2 \cosh 2K) 
\label{eq-ons-gla}
- {1\over 2} 
+ {1\over\pi} {\mathsf E(4 \kappa)} 
+ \kappa^2 ~{_4}F_3
\left[ \ba{c}{{1\over 2},1,1,{3\over 2}} \\ {2,2,2}\ea 
; 16 \kappa^2 \right]  
~.  
\ee
where ${\mathsf E(\cdot)}$ denotes the complete elliptic integral of
the second kind that can also be written as a $_2F_1$ function
\cite{jstat2015,andrewsbook}.
     { In 2011, a simpler hypergeometric formula was
       discovered by Hucht, Grüneberg and Schmidt using Wolfram
       Mathematica~\cite{hucht2011}. In 2015 the identical expression
       was independently found and a derivation was
       given~\cite{jstat2015}:}
\be
\label{eq-0gu32905u0w945u90u$-4F3}
\ln  Z 
= \ln(2 \cosh 2K) 
- \kappa^2 ~{_4}F_3
\left[ \ba{c}{1,1,{3\over 2},{3\over 2}} \\ {2,2,2}\ea 
; 16 \kappa^2 \right]  ~.
\ee
The above two expressions are related via a hypergeometric identity
(see Eq. (33) in Ref.  \cite{jstat2015}).  In 2016, the exact low
temperature series was expressed as an 
infinite series in
terms of complete and partial Bell polynomials~\cite{scirep}.
All
the above  results are restricted to  the isotropic case and do not apply
to the anisotropic Ising model.  Here we address this limitation and
report an advance for the anisotropic case:

\begin{theorem}
  \label{th1}
  Let $\kappa_1$ and $\kappa_2$ be given by (\ref{eq-kappa12}). Then
  the partition function $Z(K_1,K_2)$ of the anisotropic Ising model
  as defined in (\ref{eq-eoirhoierjoierjgoierjoiwejoiwjeth}) is given by
  \bea & &  \nonumber  
\log Z  - \tfrac 1 2
\log (4 \cosh 2K_1 \cosh 2K_2)  
\\ \nonumber  
& & 
= 
-
3 ~  \kappa_1^2 \kappa_2^2 ~
{
F^{2,1}_{0,2}
\left [~
  \ba{c|c|}
  2,5/2 & ~1,1~  \\
    \mbox{--} & ~  2,2; 2,2  
\ea~~  4\kappa^2_1, 4\kappa^2_2~
\right]
}  
\\
& & 
~~~ - \frac 1 2
~  \kappa_1^2  ~
_3F_2  
\left[
  \ba{c}{1,1,3/2} \\ 2,2\ea ;4\kappa_1^2 ~\right]
~  - \frac 1 2 \kappa_2^2  ~
_3F_2 
\left[
  \ba{c}{1,1,3/2} \\ 2,2\ea ;4\kappa_2^2 ~\right]~. 
\label{qewegweg203841029384}
\eea
\end{theorem}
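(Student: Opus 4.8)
The plan is to start from Onsager's double series (\ref{eq-Z-aeiurhgiuerhguierg}),
\[
\log Z - \tfrac12\log(4\cosh 2K_1\cosh 2K_2)
= -\tfrac12\sum_{j+k>0}\frac{(2j+2k-1)!}{(j!)^2(k!)^2}\,\kappa_1^{2j}\kappa_2^{2k},
\]
(reading the evident typo $\kappa_1^{2k}\to\kappa_2^{2k}$) and to split the sum into the three pieces $k=0$, $j=0$, and $j,k\ge 1$. The first two pieces are single-variable sums in $\kappa_1^2$ and $\kappa_2^2$ respectively; I would recognize each as the ${}_3F_2$ appearing in (\ref{eq-0gu32905u0w945u90u\$-4F3})-type formulas by computing the ratio of consecutive coefficients of $\sum_{j\ge1}\frac{(2j-1)!}{(j!)^2}\kappa_1^{2j}$ and checking it is the rational function of $j$ dictated by the parameters $\{1,1,3/2\}$ over $\{2,2\}$ with argument $4\kappa_1^2$ (using the duplication formula $(2j)! = 4^j\,j!\,(1/2)_j\cdot\tfrac{(1/2)}{(1/2)}$, i.e. $(1/2)_j = (2j)!/(4^j j!)$). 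That fixes the two $-\tfrac12\kappa_i^2\,{}_3F_2[\cdots;4\kappa_i^2]$ terms.

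The heart of the proof is the mixed sum $S = \sum_{j,k\ge1}\frac{(2j+2k-1)!}{(j!)^2(k!)^2}\kappa_1^{2j}\kappa_2^{2k}$. I would pull out the lowest term by writing $j=m+1$, $k=n+1$ with $m,n\ge0$, so that the overall prefactor becomes $\kappa_1^2\kappa_2^2$ times a double series in $x=\kappa_1^2$, $y=\kappa_2^2$ starting at $m=n=0$; the claimed answer has prefactor $3\kappa_1^2\kappa_2^2$, so the $(m,n)=(0,0)$ coefficient should be $3!/(1\cdot1) / \text{(normalization)} = 3$, which is a quick sanity check. Then I would form the general coefficient $c_{m,n} = \frac{(2m+2n+3)!}{((m+1)!)^2((n+1)!)^2}$ and compute the two ratios $c_{m+1,n}/c_{m,n}$ and $c_{m,n+1}/c_{m,n}$, expressing every factorial via Pochhammer symbols: $(2m+2n+3)! = (2m+2n+2)!\,(2m+2n+3)$ and, more usefully, $(2N+3)! = 3!\,(4)_{2N} $ for $N=m+n$, then $(4)_{2N} = 4^{N}(2)_N(5/2)_N$ by the duplication formula $(a)_{2N} = 4^N (a/2)_N((a+1)/2)_N$. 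Likewise $((m+1)!)^2 = (1!)^2 (2)_m^2$ and similarly in $n$. Matching the resulting rational functions of $m$ (at fixed $n$) and of $n$ against (\ref{eq-kdf-def}) should read off numerator parameters $a_1=2$, $a_2=5/2$ (the $m+n$-type parameters, consistent with $(4)_{2N}=4^N(2)_N(5/2)_N$), $b_1=b_1'=1$ in the $q$-slot, denominator $d_1=d_1'=2$, $d_2=d_2'=2$ (from the two squared $(2)_m$, $(2)_n$), no $c$'s, and argument $4\kappa_1^2,4\kappa_2^2$ (the $4^N = 4^m 4^n$ absorbing into $x,y$) — i.e. exactly $F^{2,1}_{0,2}[\,2,5/2\,;\,1,1\,;\,2,2;2,2\,;4\kappa_1^2,4\kappa_2^2]$, with the extra $3$ coming from $3!/2!$ combined with the $-\tfrac12$ out front giving $-3$.

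The main obstacle is bookkeeping rather than conceptual: I expect the delicate step to be keeping the duplication-formula shifts straight so that the \emph{same} pair of upper parameters $(2,5/2)$ governs \emph{both} the $m$-ratio and the $n$-ratio (as the Kampé de Fériet definition demands for an $(a_k)_{m+n}$ parameter), and verifying that no spurious parameter (e.g. a stray $(1)_{m+n}$ versus $(1)_m(1)_n$) sneaks in — the difference between a genuine two-variable parameter and a product of one-variable parameters is exactly what distinguishes this from an Appell function, so that check must be done carefully. A secondary point is justifying the termwise rearrangement: for $|\kappa_1|,|\kappa_2|$ small enough (equivalently in the ferromagnetic high-temperature regime where (\ref{eq-347t9823gh3f23}) is valid) all series converge absolutely, so splitting the sum and relabelling indices is legitimate, and the identity then extends by analytic continuation to the full domain of $Z$.
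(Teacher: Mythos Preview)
Your plan is correct and is essentially the paper's own argument: the paper likewise splits the Onsager double series into the $j=0$, $k=0$, and $j,k\ge 1$ pieces, shifts the indices in the mixed piece, rewrites $(2j+2k+3)! = 6\cdot 4^{j+k}(2)_{j+k}(5/2)_{j+k}$ and $((m+1)!)^2=((2)_m)^2$ via Pochhammer identities (packaged as a separate lemma), and reads off the $F^{2,1}_{0,2}$ and ${}_3F_2$ parameters directly from the definitions. Your extra remarks on ratio tests and on convergence/analytic continuation are fine additions; the only wobble is the sanity-check arithmetic (``$3!/2!$'') --- the constant in front of the Kampé de Fériet series is $3!=6$, and $-\tfrac12\times 6=-3$.
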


\begin{corollary}
\label{cor-wgowirjgoeirgewrg}
  \bea & & 
\log Z - \tfrac 1 2
\log (4 \cosh 2K_1 \cosh 2K_2)  
\nonumber  
\\
& & \nonumber
= 
-
3 ~  \kappa_1^2 \kappa_2^2 ~
{
F^{2,1}_{0,2}
\left [~
  \ba{c|c|}
  2,5/2 & ~1,1~  \\
    \mbox{--} & ~  2,2; 2,2  
\ea~~  4\kappa^2_1, 4\kappa^2_2~
\right]
}
\\
& & 
~~~ + \frac 1 2
\left[\,
  \log \left(\frac{1}{2} \left(1+\sqrt{1-4 \kappa_1^2}\right)\right)
+  \log \left(\frac{1}{2} \left(1+\sqrt{1-4 \kappa_2^2}\right)\right)
\,  \right]~.
\label{eqeq-34753987698ng}
\eea
\end{corollary}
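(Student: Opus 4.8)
The plan is to derive Theorem~\ref{th1} first by a careful resummation of Onsager's series~(\ref{eq-Z-aeiurhgiuerhguierg}), and then obtain the Corollary by recognizing that the two single-variable $_3F_2$ pieces in Theorem~\ref{th1} are each elementary. Starting from
\[
\log Z - \tfrac12\log(4\cosh 2K_1\cosh 2K_2) = -\frac12\sum_{j+k>0}\frac{(2j+2k-1)!}{(j!)^2(k!)^2}\,\kappa_1^{2j}\kappa_2^{2k},
\]
I would split the sum into the ``pure'' terms ($k=0$ or $j=0$) and the genuinely mixed terms ($j,k\ge 1$). The pure terms give $-\tfrac12\sum_{j\ge1}\frac{(2j-1)!}{(j!)^2}\kappa_1^{2j}$ and the analogous sum in $\kappa_2$; rewriting $\frac{(2j-1)!}{(j!)^2} = \frac{(2j)!}{2j(j!)^2}$ and converting to Pochhammer symbols via $(2j)! = 4^j (1)_j (3/2)_j / (3/2)$ — more precisely checking the ratio of consecutive terms is a rational function of $j$ — identifies each pure sum with $\tfrac12\kappa_i^2\,{_3}F_2[1,1,3/2;2,2;4\kappa_i^2]$. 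For the mixed part, with $j,k\ge1$, I substitute $j = m+1$, $k = n+1$ and show the coefficient ratio in $m$ and $n$ matches the Kampé de Fériet summand in~(\ref{eq-kdf-def}) with $p=2$, $q=1$, $r=0$, $s=2$ and the stated parameters $\{2,5/2\}$, $\{1,1;1,1\}$, $\{2,2;2,2\}$: the factor $(2(m+1)+2(n+1)-1)! = (2m+2n+3)!$ produces $(a)_{m+n}$-type Pochhammers with $a\in\{2,5/2\}$ (again after pulling out powers of $4$ and $\Gamma$-function normalizations), while $((m+1)!)^2 = (1)^2 (2)_m^2$ and similarly in $n$ give the denominator $(2)_m^2(2)_n^2$, and $\tfrac12\kappa_1^{2(m+1)}\kappa_2^{2(n+1)}$ supplies the prefactor $3\kappa_1^2\kappa_2^2$ once the numerical constant from $(2m+2n+3)!$ at $m=n=0$ is extracted (this is the origin of the $3$). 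The convergence of all three series follows from $|\kappa_i|<1/2$ in the physical regime.

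\textbf{From Theorem to Corollary.} The only new content is evaluating $-\tfrac12\kappa^2\,{_3}F_2[1,1,3/2;2,2;4\kappa^2]$ in closed form. I would recognize this as a known reduction: the generating function $\sum_{j\ge1}\frac{1}{j}\binom{2j}{j}x^j$ equals $-2\log\!\big(\tfrac12(1+\sqrt{1-4x})\big)$, valid for $|x|\le 1/4$. Indeed $\frac{(2j-1)!}{(j!)^2} = \frac{1}{2j}\binom{2j}{j}$, so $\sum_{j\ge1}\frac{(2j-1)!}{(j!)^2}\kappa^{2j} = -\log\!\big(\tfrac12(1+\sqrt{1-4\kappa^2})\big)$, which is exactly $\kappa^2\,{_3}F_2[1,1,3/2;2,2;4\kappa^2]$ by the identification in the previous paragraph. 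Hence each $-\tfrac12\kappa_i^2\,{_3}F_2[\dots]$ becomes $+\tfrac12\log\!\big(\tfrac12(1+\sqrt{1-4\kappa_i^2})\big)$, and substituting into~(\ref{qewegweg203841029384}) gives~(\ref{eqeq-34753987698ng}) directly. The Kampé de Fériet term is untouched.

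\textbf{Main obstacle.} The routine-looking but error-prone step is the bookkeeping of $\Gamma$-function and power-of-$4$ normalizations when converting $(2m+2n+3)!$ into a product of Pochhammer symbols $(2)_{m+n}(5/2)_{m+n}$ — one must use a duplication-formula identity of the form $(2N+3)! \propto 4^N (2)_N (5/2)_N$ with the correct constant, and verify it holds as an identity in $N=m+n$ rather than just checking ratios. Pinning down that the overall constant is exactly $3$ (and not, say, $3/2$ or $6$) requires evaluating both sides at $m=n=0$: the summand of~(\ref{eq-Z-aeiurhgiuerhguierg}) at $j=k=1$ is $-\tfrac12\cdot\frac{3!}{1\cdot1}\kappa_1^2\kappa_2^2 = -3\kappa_1^2\kappa_2^2$, which matches the claimed prefactor with the Kampé de Fériet series equal to $1$ at the origin — so this is the consistency check that fixes all constants. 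Everything else is a mechanical matching of rational-function-in-$(m,n)$ coefficient ratios against the definition~(\ref{eq-kdf-def}).
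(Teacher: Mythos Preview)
Your derivation of Theorem~\ref{th1} via the split into pure and mixed terms, index shift, and Pochhammer bookkeeping is essentially the paper's own argument (its Lemma~\ref{lemma2}), so that part matches. Where you genuinely diverge is in the reduction of the $_3F_2$ pieces to logarithms. You go directly through the central-binomial generating function: noting $\frac{(2j-1)!}{(j!)^2}=\frac{1}{2j}\binom{2j}{j}$ and invoking $\sum_{j\ge1}\frac{1}{j}\binom{2j}{j}x^j=-2\log\bigl(\tfrac12(1+\sqrt{1-4x})\bigr)$ (which follows in one line by integrating $\sum_{j\ge0}\binom{2j}{j}x^j=(1-4x)^{-1/2}$). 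The paper instead proves the equivalent identity $x\,{_3}F_2[1,1,3/2;2,2;x]=-4\log\bigl(\tfrac12(1+\sqrt{1-x})\bigr)$ by a considerably longer route: Mercator-expand $\log(1+\sqrt{1-x})$, apply Newton's binomial theorem to $(1-x)^{n/2}$, interchange the two sums, evaluate the inner sum in closed form via gamma functions and Euler's reflection formula, handle the $j=0$ term as a limit, and finally simplify with Legendre's duplication formula. Your route is shorter and more transparent; the paper's route has the (minor) virtue of being self-contained and not presupposing the central-binomial generating function, but yours is the cleaner argument. The consistency check at $m=n=0$ fixing the prefactor $3$ is a nice touch and matches the paper's normalization.
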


\medskip
\medskip
\medskip

\medskip
\medskip
\medskip

To prove the above, we will use the following supporting lemmas:

\begin{lemma}[Basic identities for rising factorials]
  Let $(a)_n$  be defined according to (\ref{eq-pochwewegwgwrg}). Then,
   \label{lemma1}
\bea
n!&=~&(1)_n  
~,
\\
(n+1)! &=~& (2)_n ~,\\
a+n~ &=~&  a {(a+1)_n \over  (a)_n }~,  \\
(2n+1)!~ &=~& 2^{2n} (3/2)_n (1)_n \label{eq-12414141}  
~,
\\
(2j+2k+3)!~ &=~&
{6 \cdot 2^{2(j+k)} 
(2)_{j+k}  (5/2)_{j+k} } ~.
\label{eq23r2t}  
\eea

\end{lemma}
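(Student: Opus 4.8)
The plan is to prove the five identities in Lemma~\ref{lemma1} directly from the definition~(\ref{eq-pochwewegwgwrg}) of the Pochhammer symbol. The first three are essentially immediate: since $(1)_n = \Gamma(1+n)/\Gamma(1) = n!$ and $(2)_n = \Gamma(2+n)/\Gamma(2) = (n+1)!$, the first two follow at once, and the third is just the telescoping observation that $a\,(a+1)_n = a \cdot \frac{\Gamma(a+1+n)}{\Gamma(a+1)} = \frac{\Gamma(a+n+1)}{\Gamma(a)} = (a+n)\,\frac{\Gamma(a+n)}{\Gamma(a)} = (a+n)(a)_n$, so dividing by $(a)_n$ gives the claim. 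These require no real work beyond unwinding the Gamma-function definition.

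The heart of the lemma is the duplication-type identity~(\ref{eq-12414141}), namely $(2n+1)! = 2^{2n}(3/2)_n (1)_n$. Here I would use the Legendre duplication formula for the Gamma function, $\Gamma(2z) = \frac{2^{2z-1}}{\sqrt{\pi}}\Gamma(z)\Gamma(z+\tfrac12)$, applied with $z = n+1$, which yields $\Gamma(2n+2) = (2n+1)! = \frac{2^{2n+1}}{\sqrt{\pi}}\Gamma(n+1)\Gamma(n+\tfrac32)$. Since $\Gamma(n+\tfrac32) = (3/2)_n\,\Gamma(3/2) = (3/2)_n \cdot \tfrac{\sqrt{\pi}}{2}$ and $\Gamma(n+1) = n! = (1)_n$, substituting gives $(2n+1)! = \frac{2^{2n+1}}{\sqrt{\pi}} \cdot (1)_n \cdot (3/2)_n \cdot \tfrac{\sqrt{\pi}}{2} = 2^{2n}(3/2)_n(1)_n$, as required. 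An alternative, fully elementary route avoiding $\Gamma$ altogether is to pair up the factors of $(2n+1)!$: writing $(2n+1)! = \prod_{m=1}^{n}(2m)(2m+1) \cdot 1$ one extracts $2^{2n}$ from the even/half-integer factors and recognizes the remaining products as $(1)_n$ and $(3/2)_n$; I would present whichever is cleaner, probably the duplication-formula version since the paper's audience is comfortable with special functions.

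The last identity~(\ref{eq23r2t}), $(2j+2k+3)! = 6 \cdot 2^{2(j+k)} (2)_{j+k}(5/2)_{j+k}$, is then a corollary of the same reasoning with a shifted index. Setting $N = j+k$, I would apply the duplication formula with $z = N+2$ to get $\Gamma(2N+4) = (2N+3)! = \frac{2^{2N+3}}{\sqrt{\pi}}\Gamma(N+2)\Gamma(N+\tfrac52)$; using $\Gamma(N+2) = (N+1)! = (2)_N$ (from the second identity) and $\Gamma(N+\tfrac52) = (5/2)_N\,\Gamma(5/2) = (5/2)_N \cdot \tfrac{3\sqrt{\pi}}{4}$ gives $(2N+3)! = \frac{2^{2N+3}}{\sqrt{\pi}} \cdot (2)_N \cdot (5/2)_N \cdot \tfrac{3\sqrt{\pi}}{4} = 6 \cdot 2^{2N}(2)_N(5/2)_N$, which is exactly the claim after re-substituting $N = j+k$. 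Equivalently one can derive~(\ref{eq23r2t}) from~(\ref{eq-12414141}) by writing $(2N+3)! = (2N+3)(2N+2)\,(2N+1)!$ and then using the third identity twice to push the linear factors $2N+3$ and $2N+2$ into the Pochhammer symbols, shifting $(3/2)_N$ up to $(5/2)_N$ and $(1)_N$ up to $(2)_N$ while collecting the factor $6 = 2 \cdot 3$. I do not anticipate any genuine obstacle here; the only point requiring a modicum of care is keeping track of the constants $\Gamma(3/2) = \sqrt{\pi}/2$ and $\Gamma(5/2) = 3\sqrt{\pi}/4$ and the powers of $2$ so that the numerical prefactors ($1$, $2^{2n}$, $6 \cdot 2^{2(j+k)}$) come out exactly right.
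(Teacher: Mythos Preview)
Your proposal is correct. The one point of comparison worth noting is methodological: for identity~(\ref{eq-12414141}) the paper opts for the fully elementary route you mention as an alternative---splitting $(2n+1)!$ into its odd and even factors and pulling out $2^{2n}$---rather than invoking the Legendre duplication formula. Likewise for~(\ref{eq23r2t}) the paper takes your ``equivalently'' route, writing $(2N+3)! = (2N+1)!\,(2N+2)(2N+3)$, applying~(\ref{eq-12414141}), and then using the third identity to shift $(1)_N\to(2)_N$ and $(3/2)_N\to(5/2)_N$. Your duplication-formula approach is slicker and handles both identities uniformly; the paper's version is more self-contained (no appeal to $\Gamma$-function identities) and makes~(\ref{eq23r2t}) visibly a consequence of the preceding four claims. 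Since you already sketch both options, there is no gap.
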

'

\begin{proof}
  The first three above follow immediately from the definition of the
  rising factorial.  The claim (\ref{eq-12414141}) can be proven by
  splitting the terms in the factorial into odd and even factors and
  dividing each factor by 2:
\bea
(2n+1)! & =& [1] \times 
(3)(5)(7)  \ldots (2n+1) \times (2)(4)(6)\ldots (2n)
\nonumber
\\
&=&
2^{2n}~(3/2)_n (1)_{n}  ~.  
\eea
The claim (\ref{eq23r2t}) is slightly more complicated but follows
from the previous 4 claims:
\bea
(2j+2k+3)!
& & =  (2j+2k+1)! ~(2 j + 2 k+2) (2j + 2k+3)
\\
& & \nonumber
= 2^{2j+2k}(3/2)_{j+k}(1)_{j+k} ~
(2 j + 2 k+2)(2 j + 2 k+3)
\\ \nonumber
& & 
=
2^{2j+2k+2}(3/2)_{j+k}(1)_{j+k}
(j + k+1)(j + k+3/2)
\\
 & & 
= {2^{2j+2k+2}(3/2)_{j+k}(1)_{j+k} 
 (2)_{j+k} (3/2) (5/2)_{j+k} \over (1)_{j+k}(3/2)_{j+k}}
\\
 & & 
=
{6 \cdot 2^{2(j+k)} 
(2)_{j+k}  (5/2)_{j+k} } ~,   
\eea
which completes the proof.
\hfill $\square$
\end{proof}

Our next lemma deals with the infinite double series in
(\ref{eq-Z-aeiurhgiuerhguierg}).
For convenience let us 
write 
the partition function in 
(\ref{eq-Z-aeiurhgiuerhguierg})
as 
\be 
\label{q234235weqweg}
\log Z -
\tfrac 1 2 \log (4 \cosh 2K_1 \cosh 2K_2)  
= - \frac 1 2 \Phi(\kappa_1,\kappa_2)~,
\ee
with 
$\Phi$ thus  defined as 
\be
\Phi(x_1,x_2)
= \sum_{j+k>0}^\infty
{
  (2j + 2k -1)!
\over 
(j!)^2 (k!)^2
}
x_1^{2j}
x_1^{2k} ~.
\label{eq-eorijgeoirjgoierjgioerjgijeroigj}
\ee

\begin{lemma}
  \label{lemma2}  
  \bea 
\Phi(x_1,x_2) \nonumber  
& =  & 
6 ~  x_1^2 x_2^2 ~
{
F^{2,1}_{0,2}
\left [~
  \ba{c|c|}
  2,5/2 & ~1,1~  \\
    \mbox{--} & ~  2,2; 2,2  
\ea~~  4x^2_1, 4x^2_2~
\right]
}
\\
& & 
+  x_1^2  ~
_3F_2 
\left[
  \ba{c}{1,1,3/2} \\ 2,2\ea ;4x_1^2 ~\right]
+  x_2^2  ~
_3F_2 
\left[
  \ba{c}{1,1,3/2} \\ 2,2\ea ;4x_2^2 ~\right] ~. 
\nonumber \\ 
\eea
  
  \end{lemma}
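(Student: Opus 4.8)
The plan is to split the double sum (\ref{eq-eorijgeoirjgoierjgioerjgijeroigj}) according to how many of the two indices vanish, and then to match each piece against the series definitions (\ref{eq-pfq-def}) and (\ref{eq-kdf-def}) after rewriting every factorial through Lemma~\ref{lemma1}. Since the term $(j,k)=(0,0)$ is omitted from $\Phi$, I would write
\[
\Phi(x_1,x_2)=\underbrace{\sum_{j\ge 1}\frac{(2j-1)!}{(j!)^2}x_1^{2j}}_{A(x_1)}+\underbrace{\sum_{k\ge 1}\frac{(2k-1)!}{(k!)^2}x_2^{2k}}_{A(x_2)}+\underbrace{\sum_{j,k\ge 1}\frac{(2j+2k-1)!}{(j!)^2(k!)^2}x_1^{2j}x_2^{2k}}_{B(x_1,x_2)},
\]
which is the decomposition of $\{j+k>0\}$ into $\{j\ge1,k=0\}$, $\{j=0,k\ge1\}$ and $\{j\ge1,k\ge1\}$. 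All three sub-series have nonnegative coefficients and converge absolutely on a neighborhood of the origin, so this regrouping and all the term-by-term rewritings below are legitimate there, and the resulting identity then holds wherever all the series converge.

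For the single-variable piece I would shift $j\mapsto j+1$, giving $A(x_1)=x_1^2\sum_{j\ge 0}(2j+1)!\,[(j+1)!]^{-2}x_1^{2j}$, and then substitute $(2j+1)!=2^{2j}(3/2)_j(1)_j$ from (\ref{eq-12414141}) together with $(j+1)!=(2)_j$ from Lemma~\ref{lemma1}. One of the two factors $(1)_j$ in the numerator is exactly the $j!$ sitting in the denominator of the $_pF_q$ series (\ref{eq-pfq-def}), because $(1)_j=j!$; after writing $2^{2j}x_1^{2j}=(4x_1^2)^j$ and collecting the remaining Pochhammer symbols $(1)_j,(3/2)_j$ over $(2)_j^2$, this is precisely $A(x_1)=x_1^2\,{}_3F_2[\,1,1,3/2;\,2,2;\,4x_1^2\,]$, and the identical computation with $x_1\to x_2$ handles $A(x_2)$.

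For the mixed piece I would shift both indices, $j\mapsto j+1$ and $k\mapsto k+1$, obtaining $B(x_1,x_2)=x_1^2x_2^2\sum_{j,k\ge 0}(2j+2k+3)!\,[(j+1)!\,(k+1)!]^{-2}x_1^{2j}x_2^{2k}$. Feeding in the key identity (\ref{eq23r2t}), $(2j+2k+3)!=6\cdot 2^{2(j+k)}(2)_{j+k}(5/2)_{j+k}$, together with $(j+1)!=(2)_j$, $(k+1)!=(2)_k$, and splitting $2^{2(j+k)}=4^{j}4^{k}$ to absorb the powers of two into the monomials, the general coefficient becomes $6\,(2)_{j+k}(5/2)_{j+k}\big/\big((2)_j^2(2)_k^2\big)$ multiplying $(4x_1^2)^j(4x_2^2)^k$. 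Comparing with the Kampé de Fériet series (\ref{eq-kdf-def}) for $(p,q,r,s)=(2,1,0,2)$ — where the numerator factors $(1)_m,(1)_n$ coming from $b_1=b_1'=1$ cancel the $m!\,n!$, the $(\cdot)_{m+n}$ slots carry $a_1=2$, $a_2=5/2$, and the lower slots carry $d_1=d_1'=d_2=d_2'=2$ — one recovers exactly this coefficient, so $B(x_1,x_2)=6\,x_1^2x_2^2\,F^{2,1}_{0,2}[\,\ldots;4x_1^2,4x_2^2\,]$ with the parameters displayed in the statement. Adding $A(x_1)+A(x_2)+B(x_1,x_2)$ then reproduces the claimed formula.

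The argument is mechanical once Lemma~\ref{lemma1} is in hand; the one place that demands care — and hence the main obstacle — is the bookkeeping in $B$: keeping the constant $6$ and the factor $2^{2(j+k)}$ straight, redistributing the latter as $4^{j}4^{k}$ so that it is fully absorbed into the arguments $4x_1^2,4x_2^2$, keeping the index shift consistent with the $x_1^2x_2^2$ prefactor, and matching the $(\cdot)_{m+n}$ versus $(\cdot)_m(\cdot)_n$ pattern of (\ref{eq-kdf-def}) so that all four primed/unprimed lower slots correctly receive the value $2$.
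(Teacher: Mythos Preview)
Your proof is correct and follows essentially the same route as the paper: split $\Phi$ into the two boundary single sums and the interior double sum, shift the indices to start at zero, apply the factorial identities of Lemma~\ref{lemma1}, and read off the ${}_3F_2$ and Kamp\'e de F\'eriet series from their definitions. The paper does exactly this, with slightly less explicit commentary on how the $(1)_m$, $(1)_n$ numerator factors cancel the $m!\,n!$ in (\ref{eq-kdf-def}); your bookkeeping is accurate.
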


\begin{proof}
If the sum in (\ref{eq-eorijgeoirjgoierjgioerjgijeroigj}) started from
$j=k=0$, then  the task would be easier, but the sum excludes the
point $j=k=0$. So we will rewrite $\Phi$ as a
sum of a double series for $j>0,~k>0$ 
and separate 
single series corresponding to $j=0$ and
$k=0$:
\bea 
\Phi(x_1,x_2)
& =    & \sum_{j,k=1}^\infty
{
  (2j + 2k -1)!
\over (j!)^2 (k!)^2
}
x_1^{2j}
x_1^{2k}
  +
\sum_{j=1}^\infty
{
  (2j -1)!
\over 
(j!)^2 
}
(x_1^{2j}
+
x_2^{2j})
\eea
We next change the summation indices to start at zero and invoke the identities in lemma
\ref{lemma1}:
\bea \nonumber
\Phi(x_1,x_2)
& =    & \sum_{j,k=0}^\infty
{
  (2(j+k)) +3)!
\over 
((j+1)!)^2 ((k+1)!)^2
}
x_1^{2(j+1)}
x_1^{2(k+1)} 
\\ 
& & 
+
\sum_{j=0}^\infty
{
  (2j +1)!
\over 
((j+1)!)^2 
}
(x_1^{2(j+1)}
+
x_2^{2(j+1)}) \nonumber 
\\ \nonumber
& =    &
x_1^2 x_2^2
\sum_{j,k=0}^\infty
{
6 \cdot 2^{2(j+k)} ~ (2)_{j+k}~(5/2)_{j+k}
\over 
((2)_j)^2 ~((2)_k)^2
}
x_1^{2j}
x_1^{2k}
\\  \nonumber
& & 
+
x_1^2 
\sum_{j=0}^\infty
{
2^{2j}(3/2)_j (1)_j
\over 
((2)_j)^2 
}
x_1^{2j}
\\
& & 
+
x_2^2
\sum_{j=0}^\infty
{
2^{2j}(3/2)_j (1)_j
\over 
((2)_j)^2 
}
x_2^{2j}
~.
  \label{eq-nuieniuwhgi4838438}
\eea
Finally, from the definitions (\ref{eq-pfq-def}) of the generalized hypergeometric
function and (\ref{eq-kdf-def}) of the Kampé de Fériet function,
(\ref{eq-nuieniuwhgi4838438})
becomes 
\bea
\nonumber
{\Phi(x_1,x_2) }
& =    &
6 ~  x_1^2 x_2^2 ~
{
F^{2,1}_{0,2}
\left [~
  \ba{c|c|}
  2,5/2 & ~1,1~  \\
    \mbox{--} & ~  2,2; 2,2  
\ea~~  4x^2_1, 4x^2_2~
\right]
} 
\\
& & 
+~  x_1^2  ~
_3F_2 
\left[ \ba{c}{1,1,3/2} \\ 2,2\ea ;4x_1^2 ~\right]
+~  x_2^2  ~
_3F_2 
\left[ \ba{c}{1,1,3/2} \\ 2,2\ea ;4x_x^2 ~\right] ~,~
\label{eq-Phi}
\nonumber
\\
\eea
which completes the proof.  \hfill $\square$
\end{proof}

\begin{proof}[Proof of theorem \ref{th1}]

The claim follows from using lemma \ref{lemma2} in  
(\ref {q234235weqweg}).
\hfill $\square$
\end{proof}

\begin{proof}[Proof of corollary \ref{cor-wgowirjgoeirgewrg}]
  To prove the claim (\ref{eqeq-34753987698ng}) starting from theorem \ref{th1}, it suffices to show that 
\be
\label{eq-09ire0909ihgf}
x~_3F_2 
\left[
  \ba{c}{1,1,3/2} \\ 2,2\ea ;x ~\right]
=
- {4 \log \left(\frac{1}{2} \left(\sqrt{1-x}+1\right)\right)} ~.
\ee
We begin with the Mercator series and  Newton's generalized binomial theorem:
\bea
\log (1+\sqrt{1-x})
&
=&
- \sum_{n=1}^\infty 
{ (-1)^n
(1-x)^{n/2}
\over n}
\\
&=&
- \sum_{n=1}^\infty  
{ (-1)^n 
  \over n}
\sum_{j=0}^\infty
{n/2 \choose j} (-x)^j
\\
&=&
-  \sum_{j=0}^\infty (-x)^j \sum_{n=1}^\infty  
{ (-1)^n 
  \over n}
{n/2 \choose j} ~.
\eea
Next, 
we rewrite  the binomial coefficient more conveniently as 
\be
{n/2 \choose j } =  { (n/2 +1-j)_j  \over j!} ~.
\ee
We then get  
\bea
 \log (1+\sqrt{1-x})
 \\
=
-  \sum_{j=0}^\infty (-x)^j
\sum_{n=1}^\infty  
{ (-1)^n 
  \over n}
{ (n/2 +1-j)_j  \over j!}
\\
=
-  \sum_{j=0}^\infty
{ (-x)^j\over j!}
\sum_{n=1}^\infty  
{ (-1)^n 
  \over n}
{ (n/2 +1-j)_j }
\\
=
-  \sum_{j=0}^\infty
{(-x)^j
\over j!}
\sum_{n=1}^\infty  
\left[
{ (n +1-j)_j
  \over 2n}
-
{ (n +1/2-j)_j
  \over 2n-1}
\right]~. \label{eq-j-298r72987r92837r}
 ~~~~~~ ~~~
\eea
The sum over $n$ can be done after transforming the rising factorials
to gamma functions:
\bea & & 
\sum_{n=1}^\infty  
\left[
{ (n +1-j)_j
  \over 2n}
-
{ (n +1/2-j)_j
  \over 2n-1}
\right]
\\
& & 
=
\left[
\frac{\, \Gamma (-j)}{2 \, \Gamma (1-j)^2}-\frac{\sqrt{\pi } \, \Gamma (-j)}{2 \, \Gamma \left(\frac{1}{2}-j\right) \, \Gamma (1-j)}
\right]
\\
& &
=
\left[
\frac{\sqrt{\pi } \cos (\pi  j) \, \Gamma \left(j+\frac{1}{2}\right)-\sin (\pi  j) \, \Gamma (j)}{2 \pi  j}
\right]~.
\label{eq-deno-wirgjrg}
\eea
where in the last step we have used Euler's reflection formula.
The terms for $j>0$ in (\ref{eq-j-298r72987r92837r}) can be evaluated
as usual, but the term for $j=0$ is indeterminate due to the $j$ in
the denominator in (\ref{eq-deno-wirgjrg}), so it needs to be
evaluated as a limit. We take the limit $z\to j=0$ as follows:
\be
\lim _{z \to 0}
\frac{\sqrt{\pi } \cos (\pi  z) \, \Gamma \left(z+\frac{1}{2}\right)-\sin (\pi  z) \, \Gamma (z)}{2 \pi  z} =  - \log 2~.
\ee
For $j\neq 0$ the term with $\sin (\pi j)$ vanishes. We can then simplify the expression
using  Legendre's duplication formula for  the gamma function
and applying lemma \ref{lemma1}:

\bea
\log (1+\sqrt{1-x})
&=&
\log 2 - 
\sum_{j=1}^\infty
{1\over j!}
\frac{\sqrt{\pi } \cos (\pi  j) \, \Gamma \left(j+\frac{1}{2}\right)}{2 \pi  j}
(-x)^j
\\
\log (\tfrac 1 2 (1+\sqrt{1-x}))&=&
-
\sum_{j=1}^\infty
{1\over j!}
\frac{\sqrt{\pi } \, \Gamma \left(j+\frac{1}{2}\right)}{2 \pi  j}
x^j
\\
&=&
-
\sum_{j=1}^\infty
\frac{ 2^{1-2j}  
\, \Gamma( 2j) }
{2   j \, \Gamma(j) j!}
x^j
\\
&=&
- x 
\sum_{j=0}^\infty
\frac{ 2^{-2j-2}  
\, \Gamma( 2j+2) }
{   (j+1)!^2}
x^j
\\
&=&
- \frac x 4  
\sum_{j=0}^\infty
\frac{ 2^{-2j}  
(2j+1)! }
{   (j+1)!^2}
x^j
\\
&=&
- \frac x 4  
\sum_{j=0}^\infty
\frac{ (3/2)_n (1)_n 
}
{   (2)_n^2}
x^j
\\ & & = - \frac x 4  ~
{_3F_2} 
\left[
  \ba{c}{1,1,3/2} \\ 2,2\ea ;x ~\right] ~,
\eea

\noindent 
thereby establishing (\ref{eq-09ire0909ihgf}),  which 
completes the proof. 
\hfill $\square$
\end{proof}

Theorem \ref{th1} allows the generating function of multipolygons on
the infinite square lattice to be reformulated in terms of
hypergeometric functions.  By multipolygon on the infinite square
lattice is meant a connected or disconnected simple graph all of whose nodes
have even degrees.  Let $t_1$ be the weight of edges in the first
lattice axis direction and $t_2$ be the edge weight for the other
direction. For convenience, we will say that $t_1$ is the weight of
edges of the first type (say, horizontal edges) and $t_2$ the weight
of edges of the second type (say, vertical edges).

Let $\Lambda_N(t_1,t_2)$ be the generating function for
the the number of multipolygons on a finite square lattice with
$N=M^2$ sites, such that the coefficient of the term of degree $t_1^m
t_2^n$ gives the number of multipolygons with $m$ edges of the
first type and $n$ edges of the second type.
The case $t_1=t_2$, which we will call the isotropic case, has been
widely studied~\cite{feynman}. The anisotropic case $t_1\neq t_2$ has
received somewhat  less attention~\cite{montroll}.

We define the generating function $\Lambda(t_1,t_2)$ for
anisotropic multipolygons on  the infinite square lattice 
in terms of a  limit when the number of sites goes to infinity:
\be
\Lambda(t_1,t_2) = \lim_{N \to \infty} [ \Lambda_N(t_1,t_2) ] ^{1/N} \label{eq-rgehruehrg}
~.
\ee

The connection between the function \mbox{$\Lambda(t_1,t_2) = 1 +
  t_1^2 t_2^2 + t_1^2 t_2^4 + t_1^4 t_2^2 \ldots $~} and the
low-temperature series for the partition function of the 2D Ising
model is well known.  Moreover, the self-dual property of the Ising
model on the square lattice leads to a similar relation for the
high-temperature series.  Let
\bea u_i&=&\tanh(K_i)~,\\ v_i&=&\exp (-2K_i)~,\eea 
for $i=1,2$ be high- and low-temperature variables respectively.  Then
the following are 
well known~\cite{feynman,montroll}:
\bea
\label{eq-wrheurg}
Z &=& 2  \cosh K_1  \cosh K_2 ~  \Lambda(u_1,u_2)     \\
 \label{eq-wrheurg2}
Z &=&   \exp (K_1)\exp (K_2) ~ \Lambda(v_1,v_2)  ~.
\eea
From either of these two expressions, we
can obtain the double hypergeometric formulation of 
$\Lambda(t_1,t_2)$:

\begin{theorem}
  Let $\Lambda(t_1,t_2)$ be the
  generating function for multipolygons with
anisotropic 
  weights $t_1$ and $t_2$
    for the two directions on the infinite square lattice, defined according to
    (\ref{eq-rgehruehrg}).
  Let $\Phi$ be given by (\ref{eq-Phi}).
    Then, 
  \bea
\log \Lambda(t_1,t_2) &=& ~\frac 1 
2  \log \left[ {\left( {t_1}^2+1\right) \left( {t_2}^2+1\right)}\right] 
\nonumber
\\
& & 
- \frac 1 2 
\Phi
\left[
\frac{  {t_1} \left(1- {t_2}^2\right)}{\left( {t_1}^2+1\right) \left( {t_2}^2+1\right)},
\frac{  {t_2} \left(1- {t_1}^2\right)}{\left( {t_1}^2+1\right) \left( {t_2}^2+1\right)}
\right]  
~. 
\label{eq-lambda1231123}
\eea

\label{th-lambda}
\end{theorem}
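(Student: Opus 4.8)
\textbf{Proof proposal for Theorem \ref{th-lambda}.}
The plan is to start from Onsager's partition function identity in the form packaged by Lemma \ref{lemma2}, namely equation (\ref{q234235weqweg}), which says
\[
\log Z - \tfrac12 \log(4\cosh 2K_1 \cosh 2K_2) = -\tfrac12\,\Phi(\kappa_1,\kappa_2),
\]
and then combine it with the duality relation (\ref{eq-wrheurg}), $Z = 2\cosh K_1 \cosh K_2\;\Lambda(u_1,u_2)$, where $u_i = \tanh K_i$. Taking logarithms of (\ref{eq-wrheurg}) gives $\log Z = \log 2 + \log\cosh K_1 + \log\cosh K_2 + \log\Lambda(u_1,u_2)$, so that
\[
\log\Lambda(u_1,u_2) = \log Z - \log 2 - \log\cosh K_1 - \log\cosh K_2.
\]
Substituting the Onsager identity for $\log Z$ and simplifying the constant and $\cosh$ terms, one is left with an expression of the form $\log\Lambda(u_1,u_2) = (\text{algebraic function of }K_1,K_2) - \tfrac12\,\Phi(\kappa_1,\kappa_2)$. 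The bulk of the work is then to re-express everything on the right-hand side in terms of the variables $t_i = u_i = \tanh K_i$ rather than $K_i$.

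The key computational steps are the following. First, I would express $\cosh 2K_i$ in terms of $u_i$: since $\cosh 2K_i = (1+u_i^2)/(1-u_i^2)$ and $\cosh^2 K_i = 1/(1-u_i^2)$, the combination $\tfrac12\log(4\cosh 2K_1\cosh 2K_2) - \log 2 - \log\cosh K_1 - \log\cosh K_2$ collapses, after taking care of the factors of $2$, to $\tfrac12\log[(1+t_1^2)(1+t_2^2)]$, which matches the first term on the right-hand side of (\ref{eq-lambda1231123}). Second, and this is the step I expect to be the main obstacle, I must show that under $t_i = \tanh K_i$ the Onsager variables $\kappa_i$ of (\ref{eq-kappa12}) become exactly the rational arguments appearing inside $\Phi$ in (\ref{eq-lambda1231123}), i.e.
\[
\kappa_1 = \frac{t_1(1-t_2^2)}{(1+t_1^2)(1+t_2^2)}, \qquad
\kappa_2 = \frac{t_2(1-t_1^2)}{(1+t_1^2)(1+t_2^2)}.
\]
This is where the real identity lies: starting from $2\kappa_1 = \tanh 2K_1\,\sech 2K_2$, I would write $\tanh 2K_1 = 2u_1/(1+u_1^2)$ and $\sech 2K_2 = (1-u_2^2)/(1+u_2^2)$, multiply, and check the result equals $2t_1(1-t_2^2)/[(1+t_1^2)(1+t_2^2)]$; the analogous manipulation handles $\kappa_2$. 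These are elementary hyperbolic-to-rational substitutions but must be carried through carefully, since a sign error in $\sech 2K$ versus $\tanh 2K$ would propagate.

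Once those two reductions are in place, the theorem follows immediately: the algebraic prefactor becomes $\tfrac12\log[(1+t_1^2)(1+t_2^2)]$ and the $-\tfrac12\Phi(\kappa_1,\kappa_2)$ term becomes $-\tfrac12\Phi$ evaluated at the stated rational arguments in $t_1,t_2$, which is exactly (\ref{eq-lambda1231123}). As a consistency check I would verify the low-order expansion: the right-hand side should begin $t_1^2 t_2^2 + \cdots$, matching $\log$ of the series $\Lambda = 1 + t_1^2 t_2^2 + t_1^2 t_2^4 + t_1^4 t_2^2 + \cdots$ quoted in the text. One could equally well run the whole argument starting from the low-temperature relation (\ref{eq-wrheurg2}) with $v_i = e^{-2K_i}$ instead of (\ref{eq-wrheurg}); the self-duality of the square-lattice Ising model guarantees both routes give the same $\Lambda$, and checking that the two substitutions for $\kappa_i$ in terms of $v_i$ versus $u_i$ are consistent provides an additional sanity check on the algebra. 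I would present the $u_i$ route as the main proof since $\tanh K_i$ is the more standard high-temperature variable for the polygon expansion.
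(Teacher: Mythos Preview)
Your proposal is correct and follows essentially the same route as the paper: combine (\ref{q234235weqweg}) with the high-temperature relation (\ref{eq-wrheurg}), reduce the hyperbolic prefactor $\sqrt{4\cosh 2K_1\cosh 2K_2}/(2\cosh K_1\cosh K_2)$ to $\sqrt{(1+t_1^2)(1+t_2^2)}$, and rewrite $\kappa_i$ as the stated rational functions of $t_i$ via the double-angle identities for $\tanh$ and $\sech$. The paper also records the parallel low-temperature substitution with $v_i=e^{-2K_i}$ for completeness, exactly as you suggest doing for a consistency check.
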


\begin{proof}
  We can arrive at the claim through either the low- or the
  high-temperature formulations, due to the self-dual property. For
  completeness, we show both approaches.
    First note that
  \be
{  \sqrt{4 \cosh 2K_1 \cosh 2K_2}
  \over 2 \cosh K1 \cosh K1}
    =
     \sqrt{\left( {u_1}^2+1\right) \left( {u_2}^2+1\right)}~.  
\label{eq-wrergj998u}
     \ee
Similarly,
\be
\label{eq-wrergj998v}
{  \sqrt{4 \cosh 2K_1 \cosh 2K_2}
  \over  \exp (2K_1)  \exp (2K_1)}
    =
     \sqrt{\left( {v_1}^2+1\right) \left( {v_2}^2+1\right)}~.
     \ee
  Next observe that $\kappa_1$ and $\kappa_2$ can be expressed in terms of
  the high- and low-temperature variables as 

  \bea
  \kappa_1 = 
  \frac{  {u_1} \left(1- {u_2}^2\right)}{\left( {u_1}^2+1\right) \left( {u_2}^2+1\right)}
  =
\frac{ {v_2} \left( {v_1}^2-1\right)  }{\left( {v_1}^2+1\right) \left( {v_2}^2+1\right)}~,
\\
  \kappa_2 = 
  \frac{  {u_2} \left(1- {u_1}^2\right)}{\left( {u_2}^2+1\right) \left( {u_1}^2+1\right)}
  =
  \frac{ {v_1} \left( {v_2}^2-1\right)  }{\left( {v_2}^2+1\right) \left( {v_1}^2+1\right)}~.
  \eea
  The claim
(\ref{eq-lambda1231123})
  follows from substituting these expressions for $\kappa_i$ into
(\ref{q234235weqweg})  and then using 
    either 
    (\ref{eq-wrheurg}) and 
(\ref{eq-wrergj998u}) for the high-temperature variable 
    or else   (\ref{eq-wrheurg2}) with 
(\ref{eq-wrergj998v})
    for the low-temperature variable.  \hfill $\square$
\end{proof}

The above result simplifies considerably for the isotropic case
$t_1=t_2=t$.
The next result 
is known to experts in the
field,  but for some reason never seems to have been published
as such:

\begin{corollary}
  \label{coro01}
  Let $\Lambda(t)$  denote the generating function
$\Lambda(t,t)$
  of multipolygons on
  the infinite square lattice with isotropic weights $t$. Then, 
  \be
  \log \Lambda(t) = \log  \left( {t}^2+1\right) 
  - 
\frac{  t^2 \left(t^2-1\right)^2}{\left(t^2+1\right)^4}
  ~{_4}F_3
\left[ \ba{c}{1,1,{3\over 2},{3\over 2}} \\ {2,2,2}\ea 
  ; 
\frac{16 t^2 \left(t^2-1\right)^2}{\left(t^2+1\right)^4}
  \right]  
  \label{eq-wroigjeroigjoierjg298759837o}
  ~.
  \ee

\end{corollary}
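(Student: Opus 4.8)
The plan is to obtain Corollary~\ref{coro01} from Theorem~\ref{th-lambda} by specialising to the diagonal $t_1=t_2=t$, the only substantive step being a diagonal reduction of the two-variable function $\Phi$. With $t_1=t_2=t$ the two arguments of $\Phi$ in~(\ref{eq-lambda1231123}) coincide and equal $\kappa:=t(1-t^2)/(t^2+1)^2$, while the logarithmic prefactor collapses to $\tfrac12\log[(t^2+1)^2]=\log(t^2+1)$, so that $\log\Lambda(t)=\log(t^2+1)-\tfrac12\,\Phi(\kappa,\kappa)$. Since $\kappa^2=t^2(t^2-1)^2/(t^2+1)^4$ and hence $16\kappa^2=16\,t^2(t^2-1)^2/(t^2+1)^4$, the corollary follows at once once one proves the single-variable identity
\[
\Phi(x,x)=2x^2\;{}_4F_3\!\left[\ba{c}1,1,3/2,3/2\\2,2,2\ea;\,16x^2\right].
\]

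To establish this identity I would return to the defining double series~(\ref{eq-eorijgeoirjgoierjgioerjgijeroigj}), group the terms of fixed total degree $m=j+k$ (the exclusion of $j=k=0$ merely makes the sum over $m$ start at $m=1$, which is exactly what produces the overall factor $x^2$), and use the Vandermonde convolution $\sum_{j=0}^{m}\binom{m}{j}^2=\binom{2m}{m}$ to obtain $\Phi(x,x)=\sum_{m\ge1}\frac{(2m-1)!\,(2m)!}{(m!)^4}\,x^{2m}$. Writing $c_m$ for the coefficient of $x^{2m}$, a short computation gives $c_{m+1}/c_m=4m(2m+1)^2/(m+1)^3$ and $c_1=2$; by definition~(\ref{eq-pfq-def}) these are precisely the initial value and the consecutive-coefficient ratios of $2x^2$ times the ${}_4F_3$ above, once the factorials are rewritten with the Pochhammer identities of Lemma~\ref{lemma1} exactly as in the proof of Lemma~\ref{lemma2}. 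Equivalently, one could set $x_1=x_2$ directly in the Kampé de Fériet representation of Lemma~\ref{lemma2} and apply a standard diagonal reduction of $F^{2,1}_{0,2}$, but the coefficient comparison is the shorter route.

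I do not expect a real obstacle here; the only care needed is in the index bookkeeping, so that the resulting ${}_4F_3$ is normalised to start at $m=0$ and the power of $x$ brought out front is exactly $x^2$. As an independent check of the parameter list, the corollary can also be recovered from the known isotropic partition-function formula~(\ref{eq-0gu32905u0w945u90u$-4F3}) together with the high-temperature relation~(\ref{eq-wrheurg}): writing $u=\tanh K$ gives $2\cosh K_1\cosh K_2=2/(1-u^2)$ and $2\cosh 2K=2(1+u^2)/(1-u^2)$, so the two logarithms combine to $\log(1+u^2)$, while $\kappa=u(1-u^2)/(1+u^2)^2$; with these substitutions~(\ref{eq-0gu32905u0w945u90u$-4F3}) becomes~(\ref{eq-wroigjeroigjoierjg298759837o}) verbatim after renaming $u\to t$.
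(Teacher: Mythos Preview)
Your proposal is correct. Both routes you sketch are valid, but it is worth noting that the paper's own proof is precisely your ``independent check'' at the end: it simply combines the known isotropic formula~(\ref{eq-0gu32905u0w945u90u$-4F3}) with the definition~(\ref{q234235weqweg}) of $\Phi$ (which together give $\Phi(\kappa,\kappa)=2\kappa^2\,{}_4F_3[\ldots;16\kappa^2]$) and then substitutes into Theorem~\ref{th-lambda} at $t_1=t_2=t$. Your primary argument is a genuinely different derivation: rather than importing~(\ref{eq-0gu32905u0w945u90u$-4F3}) as a black box, you obtain the diagonal identity $\Phi(x,x)=2x^2\,{}_4F_3[\ldots;16x^2]$ directly from the defining double series via the Vandermonde convolution $\sum_{j}\binom{m}{j}^2=\binom{2m}{m}$ and a ratio test on the resulting single series. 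This is more self-contained (it does not presuppose the 2015 ${}_4F_3$ result) and in fact re-derives~(\ref{eq-0gu32905u0w945u90u$-4F3}) as a by-product; the paper's route, by contrast, is shorter given that~(\ref{eq-0gu32905u0w945u90u$-4F3}) is already available.
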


\begin{proof}
  The claim follows from 
(\ref{eq-0gu32905u0w945u90u$-4F3}),  
(\ref{q234235weqweg}) and
theorem \ref{th-lambda} with
$t_1=t_2=t$.
 \hfill $\square$
\end{proof}

\bigskip

\section{Discussion and conclusion}
\label{sec-concl}

We first comment on a possible avenue for further research.  There is
a deep connection between the isotropic Ising model on the one hand
and spanning trees on the infinite square lattice on the other hand.
Guttmann and Rogers have defined a generating function that
generalizes the spanning tree constant~\cite{gr}.  Subsequently, it
was shown that this spanning tree generating function is related to
the partition function of the isotropic Ising model in a precise
way~\cite{pre2017}. The connection is due to the fact that the same
Mahler measure appears in both, which in turn is due to the random
walk structure function for the square lattice.  A natural question
now arises. Let us assume that one allows the spanning tree generating
function to have anisotropic weights in some suitably defined
manner. Then, does the known connection between spanning trees and the
isotropic Ising model generalize to the anisotropic model?  We
tentatively believe that the answer is yes,  and hope that others will
take interest in finding a conclusive answer.

We also comment on how the results were originally
intuited prior to the discovery of the actual proofs.  The results
above were not first proven or even intuited using  computer algebra
systems, theorem-proving software, artificial intelligence, machine
learning, and so forth.  Wolfram Mathematica software, for example,
cannot evaluate the integral in (\ref{eq-347t9823gh3f23}) and 
is unable to recognize the series for the definition
$\Phi$ in (\ref{eq-eorijgeoirjgoierjgioerjgijeroigj}).
In fact, the Kampé de Fériet function (\ref{eq-kdf-def}) is not part
of the repertoire of special functions included in Mathematica (at the
time of this writing).  Instead, theorem~\ref{th1} was the result of
traditional mathematical detective work, starting from (\ref
{eq-Z-aeiurhgiuerhguierg}).  After re-expressing the infinite double
series in the form (\ref{eq-eorijgeoirjgoierjgioerjgijeroigj}) of
$\Phi$, we calculated the ratio of successive terms of the series, in
the two variables separately and also together.  We thus saw that
these ratios are rational functions of the degrees of the two
variables --- precisely the mathematical signature of hypergeometric
series.  It was then just a matter of finding the suitable
hypergeometric function. We first looked at the Appell functions but
they are not of high enough order.  The Kampé de Fériet function was
the next natural candidate. What made the task slightly more difficult
is that both the Appell and Kampé de Fériet functions 
are double hypergeometric and therefore not
found in the textbooks.  For example, the {\it Special Functions} by
Andrews, Askey and Roy~\cite{andrewsbook} and {\it Special Functions:
  A Graduate Text} by Beals and Wong~\cite{sfagt} are both considered
to be authoritative texts, but neither book deals with the Appell or
Kampé de Fériet functions. Once the relevant definitions were found,
however, the proofs came quite naturally.

Finally, we  note that
theorem \ref{th-lambda} and corollary \ref{coro01}
allow 
very easy explicit evaluation of 
the multipolygon generating function to arbitrary order.
For the anisotropic case we get from (\ref{eq-lambda1231123})
\bea 
\Lambda(t_1,t_2) &=&
\nonumber 
1 ~+~ t_1^2 t_2^2 ~+~ t_1^4 t_2^2 ~+~ t_1^2 t_2^4   
\\
& & \nonumber 
~+~
t_1^2 t_2^6
 ~+~ 3 t_1^4 t_2^4
~+~t_1^6 t_2^2  
\\ 
& & \nonumber 
~+~ t_1^2 t_2^8
~+~ 6 t_1^4 t_2^6
 ~+~ 6 t_1^6 t_2^4
~+~ t_1^8 t_2^2   
\\ & &
~+~ t_1^2 t_2^{10}
~+~ 10 t_1^4 t_2^8
~+~ 22 t_1^6 t_2^6
~+~ 10 t_1^8 t_2^4 
~+~ t_1^{10} t_2^2 + \ldots
\eea
For the isotropic case, the formula
(\ref{eq-wroigjeroigjoierjg298759837o}) generates, to arbitrary order,
the celebrated series for the generating function of multipolygons
found by Cyril Domb~\cite{domb} in 1949 and whose coefficients
have  been incorporated into the On-Line Encyclopedia of Integer
Sequences (OEIS)~\cite{enc}:
\bea
\Lambda(t) &=&
1+t^4+2 t^6+5 t^8+14 t^{10}+44 t^{12}+152 t^{14}+566 t^{16}
\nonumber 
\\
& &
\nonumber 
+2234 t^{18}+9228 t^{20}+39520 t^{22}+174271 t^{24}+787246 t^{26}
\\
& & 
+3628992 t^{28}+17019374 t^{30}+O(t^{32})
~.
\eea
In fact, 
Eq. (\ref{eq-wroigjeroigjoierjg298759837o}) leads to very efficient
computation of the series. 
At the time of this writing, OEIS lists the following Wolfram
Mathematica code for generating 25 terms of the sequence A002890
\cite{enc}:
\begin{shaded}
\begin{verbatim}
(*For 25 terms,a PC computation lasts less than half an hour*)
m = 48 (*max y exponent*);
coes = CoefficientList[Series[
    Log[(1 + y^2)^2 - 2*y*(1 - y^2)*
       (Cos[2*Pi*u] + Cos[2*Pi*v])], {y, 0, m}], y] // Rest; 
nint[f_, {n_}] := 
 If[n == 2 || OddQ[n], 0, Print[n] ; 
  Integrate[Integrate[f, {u, 0, 1}], {v, 0, 1}]]; 
fy = MapIndexed[nint, coes].Table[y^k, {k, 1, m}]; 
CoefficientList[Series[Exp[fy/2], {y, 0, m}], y^2] 
(*Jean-François Alcover, Mar 19 2013*) 
\end{verbatim}

\end{shaded}
\noindent
The above calculation takes several minutes on a PC computer
\cite{enc}, as mentioned in the commented text on the first line of
code.  The reason that it takes so much time to calculate 25 terms is
that the above code depends on explicit integration of Onsager's
formula to obtain the sequence.  In contrast, Eq.
(\ref{eq-wroigjeroigjoierjg298759837o}) is ``already integrated,''
hence it should allow very much faster computation of the same
sequence.  In the following code below, we have used
(\ref{eq-wroigjeroigjoierjg298759837o}) and the Mathematica
implementation of the $_p F_q$ hypergeometric function to generate the
identical coefficients to the code above:

\begin{shaded}
\begin{verbatim}
CoefficientList[Series [E^(-((   t^2 (-1 + t^2)^2
    HypergeometricPFQ[{1, 1, 3/2, 3/2}, {2, 2, 2}, 
     (16 t^2 (-1 + t^2)^2)/(1 + t^2)^4])/(1 + t^2)^4)) (1 + t^2), 
     {t,0, 48}] , t^2] 
(* GM Viswanathan 2021 *)
\end{verbatim}

\end{shaded}
\noindent The latter code takes less than 0.1 seconds to generate 25
terms of the series, whereas the former code takes several minutes.
The speedup is by a factor larger than $10^3$.


\ack

We thank Wolfram
Research and Daniel Pryjma for a courtesy license for Mathematica.
We thank
 the anonymous referees
for helpful suggestions, and T.~C.~Adbias and 
H.~D.~Jennings for feedback.
This work was supported 
by CNPq (grant no. \mbox{302051/2018-0}).

\section*{References}


\begin{thebibliography}{mt1}


\bibitem{onsager}
Onsager L 1944
{Crystal Statistics. I. A Two-Dimensional Model with an Order-Disorder Transition},
{\it Phys. Rev. }
{\bf 65} 117 









\bibitem{andrewsbook} 
Andrews G E, 
  Askey R, 
and  Roy R  1999
{\it
  Special functions} 
{\it (Encyclopedia of Mathematics and Its Applications {\rm vol  71})} 
ed G C  Rota (Cambridge: Cambridge University Press)


  

\bibitem{sfagt} Beals R and Wong R 2010 {\it Special Functions: A
  Graduate Text} (Cambridge: Cambridge University Press)


\bibitem{kdf2} Exton, H 1978 {\it Handbook of hypergeometric
  integrals, Mathematics and its Applications} (Chichester: Ellis
  Horwood Ltd.)
  

  
\bibitem{cipra1998} 
 Cipra B A 1998 
{A new testament for special functions?} 
 {\it  SIAM News} {\bf 31(2)} 




%
%
%
%
%





\bibitem{hucht2011}
Hucht A, Grüneberg D and  Schmidt F M 2011
Aspect-ratio dependence of thermodynamic Casimir forces
{\it Phys. Rev. E} {\bf 83} 
051101 



\bibitem{jstat2015} Viswanathan G M, 2015
%
  { 
The hypergeometric series for the
partition function of the 2D Ising model} 
{\it J.  Stat.  Mech.} {\bf 2015} P04007




  
\bibitem{scirep} 
Siudem G, Fronczak A and Fronczak P 2016 
  %
Exact low-temperature series expansion for the partition function of
the zero-field Ising model on the infinite square lattice
%
{\it Sci. Rep.} {\bf  6}
33523


\bibitem{kdf1}
  Kampé de Fériet, J 1937  {La fonction hypergéométrique} {\it Mémorial
des sciences mathématiques} {\bf 85} (Paris: Gauthier-Villars) 




\bibitem{pc}
Glasser M L, personal e-mail communication,
  12 November 2014 

  
\bibitem{feynman} 
Feynman
R P, 1972
{\it Statistical Mechanics. A set of
  lectures} 
(Reading: Benjamin and Cummings Publishing)



\bibitem{montroll} 
Newell G F
and 
Montroll E W 1953
{On the Theory of the Ising Model of Ferromagnetism},
{\it   Rev. Mod. Phys.} {\bf 25}  353 


\bibitem{gr}
  Guttmann A J and Rogers M D 2012 
%
Spanning tree generating functions and Mahler measures
%
  J. Phys. A: Math. Theor. {\bf 45} 494001
  
\bibitem{pre2017} Viswanathan G M 2017
Correspondence between spanning trees and the Ising model on a square lattice
  {\it Phys. Rev. E} {\bf 95}
  062138 

\bibitem {domb}
 Domb  C 1949  Order-disorder statistics. ii. a two-dimensional model
  {\it Proceedings of the Royal Society of London. Series A. Mathematical
    and Physical Sciences} {\bf 199}  199
  %
%

  
\bibitem {enc}
OEIS Foundation Inc. 2019 The On-Line Encyclopedia of Integer Sequences, https://oeis.org/A002890
  
\end{thebibliography}
\end{document}